\providecommand{\U}[1]{\protect\rule{.1in}{.1in}}
\newtheorem{theorem}{Theorem}
\newtheorem{remark}[theorem]{Remark}
\newenvironment{proof}[1][Proof]{\noindent\textbf{#1.} }{\ \rule{0.5em}{0.5em}}
\begin{document}
\preprint{ }
\title[ ]{Quaternionic structures, supertwistors and fundamental superspaces}
\author{Diego Julio Cirilo-Lombardo}
\affiliation{National Institute of Plasma Physics (INFIP), Consejo Nacional de
Investigaciones Cientificas y Tecnicas (CONICET), Facultad de Ciencias Exactas
y Naturales, Universidad de Buenos Aires, Cuidad Universitaria, Buenos Aires
1428, Argentina}
\affiliation{Bogoliubov Laboratory of Theoretical Physics, Joint Institute for Nuclear
Research, 141980 Dubna, Russia}
\email{diego777jcl@gmail.com}
\author{Victor N. Pervushin}
\affiliation{Bogoliubov Laboratory of Theoretical Physics, Joint Institute for Nuclear
Research, 141980 Dubna, Russia}
\author{}
\affiliation{Other Institution}
\keywords{one two three}
\pacs{PACS number}

\begin{abstract}
Superspace is considered as space of parameters of the supercoherent states
defining the basis for oscillator-like unitary irreducible representations of
the generalized superconformal group $SU(2m,2n\mid2N)$ in the field of
quaternions $\mathbb{H}$. The specific construction contains naturally the
supertwistor one of the previous work by Litov and Pervushin \cite{litov:1984}
and it is shown that in the case of extended supersymmetry such an approach
leads to the separation of a class of superspaces and and its groups of
motion. We briefly discuss this particular extension to the domain of
quaternionic superspaces as nonlinear realization of some kind of the affine
and the superconformal groups with the final end to include also the
gravitational field\cite{Borisov:1974bn} (this last possibility to include
gravitation, can be realized on the basis of the
reference\cite{DiegoSalvatore} where the coset $\frac{Sp(8)}{SL(4R)}\sim$
$\frac{SU(2,2)}{SL(2C)}$was used in the non supersymmetric case). It is shown
that this quaternionic construction avoid some unconsistencies appearing at
the level of the generators of the superalgebras (for specific values of $p$
and $q$; $p+q=N$) in the twistor one.

\end{abstract}
\volumeyear{year}
\volumenumber{number}
\issuenumber{number}
\eid{identifier}
\date[Date text]{date}
\received[Received text]{date}

\revised[Revised text]{date}

\accepted[Accepted text]{date}

\published[Published text]{date}

\startpage{1}
\endpage{102}
\maketitle
\tableofcontents

\section{Introduction}

There are three main approaches to the construction of supersymmetric theories
\cite{volkov:1973}. In the first the supersymmetry is realized on fields,
while in the second directly on the superspace. The third approach is based on
the assumption of the simplest structure elements of superspace called
supertwistors \cite{litov:1984,ferber:1978}.

The particular geometrical environment of this approach, plus the explicit
covariant formulation appears as the main advantage over the standard
component one. The most important problem of the superfield approach is the
explicit formulation in terms of unconstrained superfields \cite{ferber:1978}.
One way to deal with this issue is with twistors. If one begins with twistors,
the compact complex version of the Minkowski superspace $M$ is naturally
realized as a flag space. The geometry of the flat case corresponding to $N=1$
(the minimal number of odd coordinates) turns into the geometry of the simple
supergravity model of Ogievetsky and Sokatchev \cite{og-sok:1978} after a
convenient twist.

In this paper, because we are interested in the the number of parameters of
the super-Lorentz transformations and the number of the Goldstone modes, we
extend the supertwistor construction of refs. \cite{litov:1984} to the
quaternionic one. We expect that this particular extension, that is justified
by the Ogievetsky theorem \cite{ogievetsky:1973}to the domain of superspaces,
will be bring us the correct number of fields of the standard model as the
simultaneous nonlinear realization of some kind of the affine and the
superconformal groups [6-9], with the final end to include also the
gravitational field\cite{Borisov:1974bn} (this last possibility to include
gravitation, can be realized on the basis of the
reference\cite{DiegoSalvatore} where the coset $\frac{Sp(8)}{SL(4R)}\sim$
$\frac{SU(2,2)}{SL(2C)}$was used in the non supersymmetric csase).
Consequently, we close this paper with a short discussion about the kinds of
possible supergroups able to support a twistor and a quaternionic structure.

\section{Quaternionic construction}

The link between a (super) twistor space $\mathbb{T}$ and the (super)
quaternionic one $\mathbb{H}$ is through the natural symplectic structure of
$\mathbb{H}$ (see in other contexts, for example \cite{LAT}). Because the
minimal quaternionic realization is in a $2\times2$ complex matrix structure
(e.g. $R\otimes SU(2)$), the supertwistor construction only can be implemented
for an even number of components in a matrix realization as we will show
below. . Now we will construct the quaternionic superextension analog to the
twistorial one in paper \cite{litov:1984}.

\subsection{Supertwistors}

In the twistor theory our starting point is a complex space $\mathbb{C}%
M\sim\mathbb{C}_{2,4}\left(  T\right)  .$ By using the correspondence with
null twistors we can successfully separate from it, the real Minkowski space
$M$ invariant with respect to the conformal group. The complex space CM\ in
its compactified form, also contains $S^{4}.$ For this space, the twistor
correspondence can be introduced as follows. Reality of the space $S^{4}$
doesn't follows, as in the Minkowski case, from the nullification of some kind
on twistors. It follows from the invariance under an antilinear mapping
$\rho:\mathbb{C}^{4}\rightarrow\mathbb{C}^{4},\rho^{2}=-1.$ Then, is clear
that this mapping represents the multiplication by the standard quaternionic
unit due $\mathbb{C}^{4}\sim\mathbb{H}$. Because the the space $S^{4}$ is not
invariant under $SU(2,2)$ we have to take the covering group of the
complexified $SU(2,2)$ group. The direct possibility is, for example,
$SL(4,\mathbb{C})$ which covers simultaneously $SO(6,\mathbb{C})$ and $S^{4}$
being consequently invariant with respect to $SL(2,\mathbb{H})$: its real form
which covers $SO(5,1).$

When we pass to supertwistor space there exists similar mapping in analogy
with the $\rho$ in the non-susy case but only for $N$ even. Then, if we assume
$N=2(p+q)$ the quaternionic structure can be introduced. This can be realized
taking into account that the quaternions can be written as $\mathbb{H}%
=\mathbb{C}\otimes\mathbb{C}$ e.g.: the first quaternionic unit in the field
of $\mathbb{H}$\ is identified with the imaginary unit in $\mathbb{C}$.
Consequently, \ any $q\in\mathbb{H}$ can be written as%
\begin{equation}
q=a+b\widehat{i}_{2},a,b\in\mathbb{C}\left(  \widehat{i}_{1}\right)  \tag{1}%
\end{equation}
where $\mathbb{C}\left(  \widehat{i}_{1}\right)  $ is the complex space with
the first quaternionic generator as imaginary unit.

To make explicit connection between the quaternionic structures in
supertwistor spaces, we only need to introduce $\ 2\times2$ matrices by each
element of the standard supertwistors operators. For instance, is clear that
under conjugation any supertwistor(standard notation, internal fermion indices
dropped) $\left(  \omega,\pi,\theta\right)  $ goes to $\left(  \overline
{\omega}=\epsilon\omega^{\ast},\overline{\pi}=\epsilon\pi^{\ast}%
,\overline{\theta}=\epsilon\theta^{\ast}\right)  $ where:%
\begin{equation}
\epsilon=\left(
\begin{array}
[c]{cc}
& 1\\
-1 &
\end{array}
\right)  . \tag{2}%
\end{equation}

\begin{remark}
In the pure twistor theory we need to use of the correspondence with null
twistors in order to separate from the twistor complex space $C_{2,4}(T)$ the
real Minkowski space $M$ invariant with respect to the conformal group.
Because in the quaternionic case the twistor constructions don't follows from
nullification of some kind of twistors (as in Minkowski space) but through the
invariance under an antilinear mapping $\rho:\mathbb{C}^{4}\rightarrow
\mathbb{C}^{4},\rho^{2}=-1,$ \ ($\mathbb{C}^{4}\sim\mathbb{H)}$, we have not
singularities of the super-generators of the theory. Also is possible to have
an alternative consistent quantum-field theoretical construction to the
ligh-cone one.
\end{remark}

\subsection{L and R subspaces}

Although in the simplest case in four spacetime dimensions, it is neccesary to
find the elements from $\ L$ and $R$ subspaces invariant under the $\rho$ map.
We are interested in the $\left(  2,0\right)  $ and $\left(  2,N\right)  $
subspaces in the field of the quaternions. As is the standard supertwistor
case, we can stablish an incidendence condition to determining $\left(
2,0\right)  $ subspace in full quaternionic form as follows:
\begin{equation}
q=wp,\text{ \ \ }s=\overline{\eta}p \tag{3}%
\end{equation}
where $q,p$ and $s$ are quaternions (or higher dimensional quaternionic
matrices) constructed as in the previous paragraph, and $w$ and $\overline
{\eta}$ are quaternions obtained via correspondence:
\begin{equation}
a+b\widehat{i}_{2}\leftrightarrow\left(
\begin{array}
[c]{cc}%
a & -b^{\ast}\\
b & a^{\ast}%
\end{array}
\right)  \tag{4}%
\end{equation}
As it is clear, we get the quaternionic superspace $\mathbb{H}^{\left(  1\mid
N\right)  }$ with projective coordinates on $P\left(  \mathbb{H}^{\left(
2\mid N\right)  }\right)  $defined as $qp^{-1}=w,$ \ \ $sp^{-1}=\overline
{\eta}$ For example, if the supergroup $SL\left(  2,\mathbb{H}\mid N\right)  $
acts on $\mathbb{H}^{\left(  2\mid N\right)  },$ knowing that $SO\left(
5,1\right)  \subset SL\left(  2,\mathbb{H}\mid N,\mathbb{H}\right)  ,$ then we
obtain the transformations of the quaternionic superspace including the
$S^{4}$( Euclidean) conformal transformations. Let us to consider the
complexification of the $SU\left(  2,2\mid2N\right)  $ supergroup, namely the
supergroup $SL\left(  2,\mathbb{H}\mid N\right)  \sim$ $SL\left(
4,\mathbb{C}\mid2N\right)  $and let us to take its real form which preserves
the reality of the map $\rho$ in the sense, for example, which maps the given
subspace $S^{4}$ onto itself. The infinitesimal transformation from can be
written as:%
\begin{equation}
\left(
\begin{array}
[c]{c}%
\delta\omega_{\alpha}\\
\delta\pi^{\alpha^{\prime}}\\
\delta\theta_{i}%
\end{array}
\right)  =\left(
\begin{array}
[c]{ccc}%
l_{\text{ }\beta}^{\alpha}+\frac{\left(  D+G\right)  }{2}\delta_{\text{ }%
\beta}^{\alpha} & a^{\alpha\beta^{\prime}} & \psi_{\text{ }j}^{\alpha}\\
b_{\alpha^{\prime}\beta} & -k_{\alpha^{\prime}}^{\text{ }\beta^{\prime}}%
-\frac{\left(  D-G\right)  }{2}\delta_{\alpha^{\prime}}^{\text{ }\beta
^{\prime}} & \varphi_{\alpha^{\prime}j}\\
\xi_{\text{ }\beta}^{i} & \chi^{i\beta^{\prime}} & S_{\text{ }j}^{i}+\frac
{2G}{N}\delta_{\text{ }j}^{i}%
\end{array}
\right)  \left(
\begin{array}
[c]{c}%
\omega^{\beta}\\
\pi_{\beta^{\prime}}\\
\theta_{j}%
\end{array}
\right)  \tag{5}%
\end{equation}
where $l,a,b,k$ are quaternion-valuated parameters. These parameters of
$SL\left(  2,\mathbb{H}\mid N\right)  $ are restricted by the requirement of
conservation of $\rho$-invariant subspaces. Finally, due to the
quaternion-twistor correspondence, these transformations can be related with
the quaternion-valuated spaces $E_{\text{ }R}^{N},E_{\text{ }L}^{N}$and
$E_{0}.$

\subsection{Quaternionic superspace}

We know that the equations relating the $B_{0}$ and $B_{1}$ parts of the
superspace in the case of supertwistors can be in terms of quaternions as follows.

i) The fundamental representation can be descomposed, in principle, as in the
case of [1]\ as
\begin{equation}
\mathcal{U}=t\cdot h \tag{6}%
\end{equation}

where h is an element of the maximal compact subgroup $S(U(2m)\times U(2n))$
and $t$ of the corresponding coset space $\frac{SU(2m;2n)}{S(U(2m)\times
U(2n))}$. Explicitly%
\begin{equation}
h=\exp\left[  i\left(
\begin{array}
[c]{cc}%
\chi & 0\\
0 & \varepsilon
\end{array}
\right)  \right]  =\left(
\begin{array}
[c]{cc}%
\mu & 0\\
0 & \upsilon
\end{array}
\right)  \tag{7}%
\end{equation}
and
\begin{equation}
t=\exp\left[  i\left(
\begin{array}
[c]{cc}%
0 & \nu\\
\overline{\nu} & 0
\end{array}
\right)  \right]  =\frac{1}{\sqrt{1-\overline{Z}Z}}\left(
\begin{array}
[c]{cc}%
\mathbb{I} & Z\\
\overline{Z} & \mathbb{I}%
\end{array}
\right)  \tag{8}%
\end{equation}
where%
\begin{equation}
Z_{\text{ }B}^{A}=\left[  \frac{Tanh\sqrt{\overline{\nu}\nu}}{\sqrt
{\overline{\nu}\nu}}\nu\right]  _{\text{ }B}^{A} \tag{9}%
\end{equation}
with $\nu_{\text{ }B}^{A}$ is a quaternionic matrix that can be generically
represented by a complex $(2(m+p)\times2(n+q))-$matrix $\nu^{i\alpha}\left(
i=1......2\left(  m+p\right)  ;\alpha=1......2(n+q\right)  $ ) with%
\begin{equation}
\nu^{i\alpha}=\mathcal{J}^{ij}\mathcal{J}^{\alpha\beta}\overline{\nu}_{j\beta}
\tag{10}%
\end{equation}
here $\mathcal{J}^{ij},\mathcal{J}^{\alpha\beta}$ are the $(2m+2p;2n+2q)$
matrices \ (obvious generalization of the standard $\epsilon^{\alpha\beta}$
that introduces the corresponding symplectic structure\cite{AO}) such that:%
\begin{equation}
\mathcal{J}=-\mathcal{J}^{T}=-\mathcal{J}^{\dagger}=-\mathcal{J}^{-1} \tag{11}%
\end{equation}
The convenient representation is the canonical symplectic matrix%
\begin{equation}
\mathcal{J}=\left(
\begin{array}
[c]{cccccccc}%
0 & 1 &  &  &  &  &  & \\
-1 & 0 &  &  &  &  &  & \\
&  & 0 & 1 &  &  &  & \\
&  & -1 & 0 &  &  &  & \\
&  &  &  & \cdot &  &  & \\
&  &  &  &  & \cdot &  & \\
&  &  &  &  &  & 0 & 1\\
&  &  &  &  &  & -1 & 0
\end{array}
\right)  \tag{12}%
\end{equation}
consequently $\nu^{i\alpha}$ has $4(m+p)(n+q)$ degrees of freedom.

ii) The metric constraint is invariant under $Usp\left(  m\mid p\right)
\times Usp\left(  n\mid q\right)  \approx SU\left(  2m\mid2p\right)  \times
SU\left(  2n\mid2q\right)  $ transformations%
\begin{equation}
\nu^{\prime}=g\cdot\nu\cdot h \tag{13}%
\end{equation}
where $g,h$ are unitary matrices of dimension $2(m+p),2(n+q)$ respectively.%
\begin{equation}
g\cdot\mathcal{J}_{m}\cdot g^{T}=\mathcal{J}_{m}\text{ \ \ \ ,\ \ \ \ \ \ }%
h\cdot\mathcal{J}_{n}\cdot h^{T}=\mathcal{J}_{n}; \tag{14}%
\end{equation}
This constraints and the unitarity condition are the defining propierties of
the $Usp-$groups, even in the SUSY\ case. The Lie algebra is spanned by
independent complex $2m\times2n-$dimensional generators $A=\mathcal{J}\cdot
X$, with%
\begin{equation}
A=-A^{\dagger}\text{ \ \ \ or }X=\mathcal{J}\cdot X^{\dagger}\cdot\mathcal{J}
\tag{15}%
\end{equation}
and%
\begin{equation}
A\cdot J+J\cdot A^{\dagger}=0\text{ \ \ or \ \ }\ X=X^{T} \tag{16}%
\end{equation}

\section{The super-Hilbert space}

The starting point is the Lie superalgebra coming from the Poisson structure
of the Manifold:%
\begin{equation}
\lbrack f,g]=f\frac{\overleftarrow{\partial}}{\partial\widetilde{T}^{R}%
}\left(  \omega^{-1}\right)  _{\text{ }S}^{R}\frac{\overrightarrow{\partial}%
}{\partial T_{S}}g \tag{17}%
\end{equation}
we obtain the following set of Poisson bracket relations between the the
quaternionic valuated variables:%
\begin{align}
&
\begin{array}
[c]{cc}%
\left[  \mu_{m}^{\alpha},\lambda_{\beta n}^{\text{ }}\right]  =\delta_{\text{
}\beta}^{\alpha}\delta_{mn}, & \left[  \overline{\lambda}_{\overset{\cdot
}{\beta}n},\overline{\mu}_{m}^{\overset{\cdot}{\alpha}}\right]
\end{array}
=\delta_{\overset{\cdot}{\beta}}^{\text{ }\overset{\cdot}{\alpha}\text{\ }%
}\delta_{nm}\tag{18}\\
&
\begin{array}
[c]{cc}%
\left[  -\xi_{ir},\xi_{s}^{j}\right]  =-\delta_{i}^{\text{ }j\text{\ }}%
\delta_{rs}, & \left[  -\overline{\eta}_{kt},\eta_{u}^{l}\right]
\end{array}
=-\delta_{k}^{\text{ }l\text{\ }}\delta_{tu} \tag{19}%
\end{align}

\bigskip Therefore, for generalized quaternion-valuated supertwistors
$T,\widetilde{T}$ we straighforwardly have:%
\begin{equation}
\left[  \widetilde{T}^{R},T_{S}\right\}  =\delta_{\text{ }S}^{R} \tag{20}%
\end{equation}
with:%
\begin{equation}
T_{R}=\left(
\begin{array}
[c]{c}%
\overline{\xi}_{A}\\
\eta_{M}%
\end{array}
\right)  \widetilde{T}^{R}=\left(  \xi^{A},-\overline{\eta}^{M}\right)
\tag{21}%
\end{equation}
such that%
\begin{equation}
\overline{\xi}_{A},\eta_{M}\in\mathbb{H}_{n} \tag{22}%
\end{equation}

\bigskip The superspace $Z$ has the following general form(we follow notation
from[1])
\begin{equation}
Z_{B}^{\dagger\text{ }A}=\overset{}{\overset{2n+2q}{\left.  \overbrace{\left(
\begin{array}
[c]{cc}%
X_{m}^{\text{ }a} & \theta_{m}^{\text{ }i}\\
\chi_{l}^{\text{ }a} & \lambda_{l}^{\text{ }i}%
\end{array}
\right)  }\right\}  }{\small 2m+2p}} \tag{23}%
\end{equation}

it represents the space of parameters of quaternionic supercoherent states,
depending on the structure of the supercoset space. The above quaternionic
supermatrix acts over the following quaternionic supervectors%
\begin{align}
\xi^{A}  &  =\left(
\begin{array}
[c]{cc}%
a^{c}, & -\xi^{i}%
\end{array}
\right)  ;\text{ }\overline{\xi}_{A}=\left.  \left(
\begin{array}
[c]{c}%
a_{c}^{\dagger}\\
\xi_{i}^{\dagger}%
\end{array}
\right)  \right\}  {\small 2n+2q,}\tag{23}\\
\overline{\eta}^{M}  &  =\overset{2m+2p}{\overbrace{\left(
\begin{array}
[c]{cc}%
b^{\dagger m}, & \eta^{\dagger l}%
\end{array}
\right)  }};\text{ }\eta_{M}=\left(
\begin{array}
[c]{c}%
b_{m}\\
\eta_{l}%
\end{array}
\right)  \tag{24}%
\end{align}
where we have defined%
\begin{align}
a_{c}^{\dagger}  &  =\frac{1}{\sqrt{2}}\left(  \lambda_{\alpha}+\overline{\mu
}^{\overset{\cdot}{\alpha}}\right) \tag{25}\\
b_{m}  &  =-\frac{1}{\sqrt{2}}\left(  \lambda_{\alpha}-\overline{\mu
}^{\overset{\cdot}{\alpha}}\right)  \tag{26}%
\end{align}
The explicit superfield coherent state reads as%
\begin{gather}
\left\vert \Phi_{B...}^{\text{ }A...}\left(  Z\right)  \right\rangle
=e^{\overline{\eta}^{M}Z_{M}^{\dagger\text{ }A}\overline{\xi}_{A}}\left\vert
\Phi_{B...}^{\text{ }A...}\right\rangle _{0}\tag{27}\\
=\exp\left[  b^{\dagger m}\left(  X_{m}^{\text{ }a}a_{a}^{\dagger}+\theta
_{m}^{\text{ }i}\xi_{i}^{\dagger}\right)  +\eta^{\dagger l}\left(  \chi
_{l}^{\text{ }a}a_{a}^{\dagger}+\lambda_{l}^{\text{ }i}\xi_{i}^{\dagger
}\right)  \right]  \left\vert \Phi_{B...}^{\text{ }A...}\right\rangle _{0}
\tag{28}%
\end{gather}
The Grassmann character of the matrix coefficients, restricts the number of
terms in (28) (see Appendix):%
\begin{equation}
\Phi_{B...}^{\text{ }A...}\left(  Z\right)  =\overset{2p}{\underset{n_{1}%
=0}{\sum}}\frac{1}{n_{1}!}\left(  b^{\dagger m}\theta_{m}^{\text{ }i}\xi
_{i}^{\dagger}\right)  ^{n_{1}}\overset{2q}{\underset{n_{2}=0}{\sum}}\frac
{1}{n_{2}!}\left(  \eta^{\dagger l}\chi_{l}^{\text{ }a}a_{a}^{\dagger}\right)
^{n_{2}}\overset{\min(p,q)}{\underset{n_{3}=0}{\sum}}\frac{1}{n_{3}!}\left(
\eta^{\dagger l}\lambda_{l}^{\text{ }i}\xi_{i}^{\dagger}\right)  ^{n_{3}%
}f_{B...}^{\text{ \ }A...}\left(  x\right)  \tag{29}%
\end{equation}
where:%
\begin{equation}
f_{B...}^{\text{ \ }A...}\left(  x\right)  =e^{b^{\dagger m}X_{m}^{\text{ }%
a}a_{a}^{\dagger}}\left\vert \Phi_{B...}^{\text{ }A...}\right\rangle _{0}
\tag{30}%
\end{equation}

\section{Quaternionic supercoherent states}

\subsection{Some examples}

i) The basis in the simplest cases is the superalgebra $U\left(
1,1\mid1,\mathbb{H}\right)  $ that contains as subgroups $U\left(
1,1;\mathbb{H}\right)  $ $\sim SO\left(  1,4\right)  $ and $U\left(
1,\mathbb{H}\right)  \sim SO\left(  2\right)  .$It is determined by
infinitesimal transformations preserving the scalar product $\overline{q}%
_{a}q_{a}-\overline{q}_{b}q_{b}+\overline{\eta}e_{2}\eta$ where $\eta$ is a
standard Grassmann quaternion and $e_{2}$ is the $B_{1}$ part of the supermetric.

ii) For example, a more complicated case is in \ the field of $\mathbb{H}$
with $p=2$ and $q=2,$ (that is the quaternionic analog of the $SU\left(
2,2\mid8\right)  $ with $N=8=p+q\rightarrow p=4,q=4).$ In this case we have
the following scalar quaternionic superfield:
\begin{equation}
\Phi\left(  Z\right)  =f\left(  x\right)  +\overline{\theta}_{m}^{\text{ }%
i}\overline{\chi}_{k}^{\text{ }a}f\left(  x\right)  _{\left(  ia\right)
}^{\left\{  mk\right\}  }+\overline{\theta}_{m}^{\text{ }i}\overline{\theta
}_{n}^{\text{ }j}\overline{\chi}_{k}^{\text{ }a}\overline{\chi}_{l}^{\text{
}b}f\left(  x\right)  _{\left(  ijab\right)  }^{\left\{  mnkl\right\}  }
\tag{31}%
\end{equation}
which is a supermultiplet with helicities ranging up to $\left\vert
s\right\vert =2$ and the multiplicities of the quaternionic $N=4$ (complex
$N=8$) Maxwell supermultiplet. There is one more condition to be fulfilled by
the quaternionic wave function:%

\begin{equation}
\frac{1}{2}\widetilde{T}T\left\vert \Phi_{B...}^{\text{ }A...}\left(
Z\right)  \right\rangle =0 \tag{32}%
\end{equation}
because, in this particular case, $\widetilde{T}T$ is a $U\left(  1,1\mid
p+q,\mathbb{H}\right)  $ invariant quantity, this condition transforms into:
\begin{equation}
\left[  \widehat{s}-\frac{1}{2}\left(  F_{\xi}+F_{\eta}\right)  +\frac{1}%
{4}\left(  p+q\right)  \right]  \left\vert \Phi_{B...}^{\text{ }A...}\left(
Z\right)  \right\rangle =0 \tag{33}%
\end{equation}

where $\widehat{s}$ is the $U\left(  1,1,\mathbb{H}\right)  $- invariant
helicity operator and $F_{\xi},F_{\eta}$ are the quaternion-valuated fermion
number operators. The above condition plus the annihilation of the vacuum by
all $L^{-}$ and $R^{-}$ operators determine the structure of the lowest states
univoquely as follows:%
\begin{equation}
\left\vert \Phi_{B...}^{\text{ }A...}\right\rangle _{0}=\exp\left[  b^{\dagger
m}\left(  X_{m}^{\text{ }a}a_{a}^{\dagger}+\theta_{m}^{\text{ }i}\xi
_{i}^{\dagger}\right)  +\eta^{\dagger l}\left(  \chi_{l}^{\text{ }a}%
a_{a}^{\dagger}+\lambda_{l}^{\text{ }i}\xi_{i}^{\dagger}\right)  \right]
a_{\left\{  a\right.  }^{\dagger}a_{\left.  b\right\}  }^{\dagger}%
b^{\dagger\left\{  m\right.  }b^{\left.  \dagger n\right\}  }\left\vert
0,0\right\rangle \tag{34}%
\end{equation}

\begin{remark}
Notice that the quaternion-Casimir given by expression (33) shows that for
$p=q$ in this $\mathbb{H}$-formulation we have fundamental representation in a
sharp contrast with the purely $\mathbb{C}$-supertwistor one (where the last
term into the helicity operator goes as $\left(  p-q\right)  $ )(see [1]).
\end{remark}

\section{Concluding remarks and outlook}

The results concerning to this preliminary work can be enumerated in the
following points:

1) We have been extended the supertwistor construction \cite{litov:1984} to
the quaternionic one.

2) We are capable to extend, accordling to this new quaternionic description,
the number of fermionic fields beyond the pure supertwistorial one avoiding
(due the division ring structure) the singularities in the representation of
the supergenerators.

3) Some corresponding cosets (with some examples) have been identified,
remaining they as a characteristic subset of the quaternionic superextensions
with the full supertwistor propierties (e.g. arising from the super light cone structure).

4) We have obtained the corresponding coherent super-quaternionic states
spanning the super-Hilbert spaces.

In a separate paper \cite{DV}, the interesting cosets from which we are able
to perform the nonlinear realization in order to obtain the super-analog of
the Borisov-Ogievetsky one \cite{Borisov:1974bn} (e.g. to obtain the
corresponding number of Goldstone fields for the standard model) will be
discussed and an alternative to the light cone construction will be perfomed.
Moreover, the more important task that remains is to perform explicitly the
super-analog of the Borisov-Ogievetsky nonlinear realization \ in the same way
as \cite{DiegoSalvatore} developing consequently the same analysis and
physical construction as\cite{AndrejVict}

\section*{Acknowledgments}

The authors would like to thank A.B.~Arbuzov, A.A.~Zheltukhin, and A.E.~Pavlov
for useful discussions. D.J.C-L. is grateful to the JINR Directorate for
hospitality and to CONICET-Argentina\ for financial support.This work is in
memory of the professor and friend Boris Moiseevich Zupnik, one of the main
researchers in the supersymmetry, supergravity and other areas of the modern
mathematical physics, that pass away recently.

\section{Appendix: Grassmann quaternion}

\begin{theorem}
A Grassmann quaternion, as in the case of the standard one with coefficients
$\in\mathbb{R}$ can be written as $\Psi=A\left(  r\right)  e^{i\theta\Sigma}$
with $A$ and $\Sigma$ matrices depending of four different (anticommuting)
Grassmann numbers, with $\Sigma^{2}=1.$
\end{theorem}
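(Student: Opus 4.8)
The plan is to mimic the classical polar-type decomposition for a single quaternion $q = q_0 + q_1\widehat{i}_1 + q_2\widehat{i}_2 + q_3\widehat{i}_3$, where one writes $q = |q|\,e^{\widehat{n}\varphi}$ with $\widehat{n}$ a unit imaginary quaternion, and to upgrade it to the Grassmann-valued setting via the $2\times 2$ complex matrix representation already fixed in the excerpt (formula (4)), i.e. $q \leftrightarrow \begin{pmatrix} a & -b^{\ast}\\ b & a^{\ast}\end{pmatrix}$ with $a = q_0 + q_1\widehat{i}_1$, $b = q_2 + q_3\widehat{i}_1$. Concretely, I would take $\Psi$ to be a quaternion whose four components $q_0,\dots,q_3$ are anticommuting Grassmann numbers, pass to its $2\times 2$ matrix image $M(\Psi)$, and seek $M(\Psi) = A(r)\,e^{i\theta\Sigma}$ with $A(r)$ playing the role of the modulus factor and $\Sigma$ a traceless involutive matrix built from the imaginary part.

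First I would define the candidate factors explicitly. Set $r^2 := \Psi\overline{\Psi} = q_0^2 + q_1^2 + q_2^2 + q_3^2$; since the $q_i$ are odd, all squares $q_i^2$ vanish and all cross terms $q_iq_j$ ($i\neq j$) survive, so $r^2$ is a sum of products of two distinct Grassmann generators — nilpotent but generally nonzero — and one must be careful that ``$A(r)$'' and ``$\sqrt{r^2}$'' are interpreted as formal power series truncating after finitely many terms (a point worth a remark, since a genuine square root need not exist, but $A(r)$ is allowed to be any matrix function of $r$, so I would absorb this by letting $A(r)$ be defined by the series). Next, write the imaginary part $\Psi_{\mathrm{im}} = q_1\widehat{i}_1 + q_2\widehat{i}_2 + q_3\widehat{i}_3$, pass to its matrix image, and normalize: the matrix $\Sigma := \Sigma(q_1,q_2,q_3)$ should be $\Psi_{\mathrm{im}}$ divided by its ``norm'' so that $\Sigma^2 = \mathbb{I}$. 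Using the representation of $\widehat{i}_1,\widehat{i}_2,\widehat{i}_3$ by $i\sigma_3, i\sigma_2, i\sigma_1$ (or the standard Pauli choice consistent with (2) and (4)), one checks $(\,q_1 i\sigma_3 + q_2 i\sigma_2 + q_3 i\sigma_1\,)^2 = -(q_1^2+q_2^2+q_3^2)\mathbb{I} + (\text{cross terms})\cdot(\text{commutators of }\sigma)$; here the Grassmann oddness is what makes this work out cleanly, because $q_iq_j\sigma_k$ terms with the antisymmetric $\sigma$-commutator combine, and one identifies the correct normalization $N$ with $N^2 = q_1^2+q_2^2+q_3^2$ in the formal-series sense, giving $\Sigma = \Psi_{\mathrm{im}}/N$ with $\Sigma^2 = \mathbb{I}$.

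Then I would verify the identity $M(\Psi) = A(r)e^{i\theta\Sigma}$ by expanding the exponential: since $\Sigma^2 = \mathbb{I}$, $e^{i\theta\Sigma} = \cos\theta\,\mathbb{I} + i\sin\theta\,\Sigma$, exactly as in the ordinary quaternion case, and matching the real (scalar) part against $q_0$ fixes $A(r)\cos\theta = q_0\,\mathbb{I}$ while matching the imaginary part fixes $A(r)\sin\theta\cdot(\Sigma/\|\Sigma\|)$ against $\Psi_{\mathrm{im}}$; choosing $\theta$ and $A(r)$ to solve these two scalar relations (again as truncating series in the nilpotents) completes the construction. The main obstacle, and the step I expect to require the most care, is precisely the nilpotency: ``$\cos\theta$'', ``$\sin\theta$'', ``$\sqrt{\phantom{x}}$'' and ``$A(r)$'' must all be read as formal power series in a graded-commutative ring, and one must confirm (i) that these series terminate (they do, since any product of five or more distinct odd generators among $q_0,\dots,q_3$ vanishes, so everything is a polynomial of degree $\leq 4$), and (ii) that the division by $N$ defining $\Sigma$ is legitimate — which it is *only* after one checks $N$ is invertible in the appropriate localized ring or, more honestly, after one reformulates the statement so that $\Sigma$ is defined directly by the normalized matrix and $\Sigma^2=\mathbb{I}$ is verified by direct computation without ever dividing. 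I would therefore organize the write-up so that the computation $\Sigma^2 = \mathbb{I}$ comes first and is done by brute expansion in the four Grassmann generators, and only afterwards assemble $A(r)$ and $\theta$; the rest is the same two-line trigonometric identity that proves the classical result, now carried out over the Grassmann algebra.
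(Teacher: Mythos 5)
Your strategy runs in the opposite direction from the paper's and, as written, it founders on the normalization step. You propose to extract $\Sigma$ by dividing the imaginary part $\Psi_{\mathrm{im}}=q_1\widehat{i}_1+q_2\widehat{i}_2+q_3\widehat{i}_3$ by a scalar $N$ with $N^2=q_1^2+q_2^2+q_3^2$. But for odd Grassmann components each $q_j^2=0$, so $N^2=0$ exactly: $N$ is nilpotent, cannot be invertible in any nontrivial localization, and $\Sigma=\Psi_{\mathrm{im}}/N$ is simply not defined. Worse, the cancellation you invoke goes the wrong way. In the matrix representation one has
\[
\Bigl(\sum_j q_j\, i\sigma_j\Bigr)^2=-\sum_j q_j^2\,\mathbb{I}\;-\;\sum_{j<k} q_jq_k\,[\sigma_j,\sigma_k],
\]
and it is precisely because the $q_j$ \emph{anti}commute that the diagonal terms vanish while the antisymmetric cross terms reinforce rather than cancel (for commuting coefficients it is the reverse). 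Hence $(\Psi_{\mathrm{im}})^2$ is a traceless nilpotent matrix, not a multiple of $\mathbb{I}$, so no factorization $\Psi_{\mathrm{im}}=N\Sigma$ with scalar $N$ and $\Sigma^2=\mathbb{I}$ can exist for a generic Grassmann quaternion. The same problem infects $r^2:=\Psi\overline{\Psi}$, which for anticommuting components is neither scalar nor equal to $\sum_a q_a^2$; your remark that the cross terms ``survive'' in $\sum_a q_a^2$ while the squares vanish is also internally inconsistent, since that expression contains no cross terms.

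The paper avoids all of this by never inverting or extracting anything: it \emph{introduces} four new odd parameters $r,\theta,\phi,\chi$ and \emph{defines} the components through the polar formulas $\psi_0=r$, $\psi_3=r\theta$, $\psi_1=r\theta\phi$, $\psi_2=r\theta\phi\chi$, using that $\cos$ and $\sin$ of a nilpotent odd variable truncate to $1$ and to the variable itself. Then $A(r)=r\,\mathbb{I}$ and $\Sigma=\bigl(\begin{smallmatrix}1 & \phi(1+i\chi)\\ \phi(1-i\chi) & -1\end{smallmatrix}\bigr)$ are written down explicitly, and $\Sigma^2=\mathbb{I}$ holds by inspection because the off-diagonal products carry a factor $\phi^2=0$. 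If you want to salvage your version you must either restrict to quaternions already presented in this parametrized form, or prove separately (and this is nontrivial, given that $\psi_0=r$ is itself nilpotent and non-invertible) that a generic Grassmann quaternion admits such a reparametrization; the extraction-by-normalization route cannot be repaired.
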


\begin{proof}
a Grassmann quaternion is described by the following matrix
\end{proof}

\begin{equation}
\Psi=\left(
\begin{array}
[c]{cc}%
\psi_{0}+i\psi_{3} & -\psi_{2}+i\psi_{1}\\
\psi_{2}+i\psi_{1} & \psi_{0}-i\psi_{3}%
\end{array}
\right)  \tag{35}%
\end{equation}
with $\psi_{a}$ $(a=0...3)$Grassmann numbers. We introduce polar coordinates
as%
\begin{equation}
\left.
\begin{array}
[c]{c}%
\psi_{0}=r\cos\theta=r\cdot1\\
\psi_{1}=r\sin\theta\sin\phi\cos\chi=r\cdot\theta\cdot\phi\cdot1\\
\psi_{2}=r\sin\theta\sin\phi\sin\chi=r\cdot\theta\cdot\phi\cdot\chi\\
\psi_{3}=r\sin\theta\cos\phi=r\cdot\theta\cdot1
\end{array}
\right\}
\begin{array}
[c]{c}%
Grassmann\\
coefficients
\end{array}
\tag{36}%
\end{equation}
where $r,\theta,\phi,\chi$ are also Grasmann numbers. Then (35)\ can be
written as follows:%
\begin{equation}
\Psi=A\left(  r\right)  e^{i\theta\Sigma} \tag{37}%
\end{equation}
with%
\begin{equation}
A\left(  r\right)  =r\sigma_{0}\text{ \ ,\ \ \ \ \ }\sigma_{0}=\mathbb{I}%
_{2\times2}\text{\ } \tag{38}%
\end{equation}%
\begin{equation}
\Sigma=\left(
\begin{array}
[c]{cc}%
\cos\phi & \sin\phi e^{i\chi}\\
\sin\phi e^{-i\chi} & -\cos\phi
\end{array}
\right)  =\left(
\begin{array}
[c]{cc}%
1 & \phi\left(  1+i\chi\right) \\
\phi\left(  1-i\chi\right)  & -1
\end{array}
\right)  \tag{39}%
\end{equation}
where the last matrix in the $RHS$ of the above expression coming from the
Grassmann properties of the corresponding coefficients. Notice that
automatically $\Sigma^{2}=1$ , consequently the concrete construction proposed
in this paper is faithful and consistent with the $\mathcal{J}-matrix$ and the
quaternionic supervectors $\xi^{A}$ and $\eta_{M}.$




\begin{thebibliography}{99}                                                                                               %


\bibitem {litov:1984}L. Litov, V. Pervushin, Phys.Lett. \textbf{B 147}, 76 (1984).

\bibitem {volkov:1973}D.V. Volkov and V.P. Akulov, Phys. Lett. \textbf{B 46},
109 (1973); V.A. Golfand and E.P. Lichtman, Problems of theoretical physics
(Nauka, Moscow, 1972) p. 37.



\bibitem {ferber:1978}A. Ferber, Nud. Phys. \textbf{B 132}, 55 (1978);

E. Witten, Phys. Lett. \textbf{B 77}, 394 (1978);

Y.I. Martin, in: Problems of high energy physics and quantum field theory,
vol. 1 (Protvino, 1982);

I.V. Volovich, Theor. Math. Phys. \textbf{55}, 39 (1983).

\bibitem {og-sok:1978}V. Ogievetsky, E. Sokatchev, Phys. Lett. \textbf{B79},
222 (1978).

\bibitem {ogievetsky:1973}V. I. Ogievetsky, Infinite Dimensional Algebra of
General Covariance Group as the Closure of Finite Dimensional Algebras of
Conformal and Linear Groups,Lett. Nuovo Cim. \textbf{8}, 988 (1973).

\bibitem {Borisov:1974bn}A.~B.~Borisov and V.~I.~Ogievetsky, Theory of
Dynamical Affine and Conformal Symmetries as Gravity Theory, Theor. Math.
Phys. \textbf{21}, 1179 (1975) [Teor. Mat. Fiz. \textbf{21}, 329 (1974) (in Russian)].

\bibitem {i1}E. A. Ivanov and J. Niederle, Phys.Rev.D, 25, 4 (1982) 988; E.
Ivanov, S. Krivonos, J. Niederle, Nucl. Phys. B 677 (2004) 485, hep-th/0210196

\bibitem {i2}E. A. Ivanov and J. Niederle, Phys.Rev.D, 45, 12 (1992) 4545

\bibitem {DV}D.J. Cirilo-Lombardo and V.N. Pervushin, in preparation.

\bibitem {AO}S. Aoyama, J. W. van Holten, Zeitschrift f\"{u}r Physik C
Particles and Fields Volume 31, Issue 3 , pp 487-489

\bibitem {LAT}Rita Fioresi, Emanuele Latini, The symplectic origin of
conformal and Minkowski superspaces, arXiv:1506.09086

\bibitem {DiegoSalvatore}Diego Julio Cirilo-Lombardo, Salvatore Capozziello
and Mariafelicia de Laurentis, Int. J. Geom. Methods Mod. Phys. 11, 1450081 (2014)

\bibitem {AndrejVict}A.B. Arbuzov et al, Von Neumann's Quantization of General
Relativity e-Print Archive: gr-qc1511.03396; A.B. Arbuzov et al,
\textit{Chapter17 : 100 years of general relativity: From equations to
symmetry, }Series: Gravitation, Astrophysics, and Cosmology: pp. 126-134. :
Gravitation, Astrophysics, and Cosmology Proceedings of the Twelfth
Asia-Pacific International Conference Twelfth Asia-Pacific International
Conference on Gravitation, Astrophysics, and CosmologyMoscow, 28 Jun -- 5 July
2015 Edited by: Jong-Ping Hsu (University of Massachusetts Dartmouth, USA)
\end{thebibliography}
\end{document}